\long\def\symbolfootnote[#1]#2{\begingroup%
\def\thefootnote{\fnsymbol{footnote}}\footnote[#1]{#2}\endgroup}
\newcommand{\oen}{[\hspace*{-1.2pt}[}
\newcommand{\cen}{]\hspace*{-1.2pt}]}
\newcommand{\prism}{\textsc{prism}}
\newcommand{\palps}{\textsc{palps}}
\newcommand{\spalps}{\textsc{s-palps}}
\newcommand{\wsccs}{\textsc{wsccs}}
\newcommand{\Nb}{{\textbf{Nb}}}
\renewcommand{\S}{{\textbf{S}}}
\renewcommand{\L}{{\textbf{Loc}}}
\newcommand{\Act}{{\textbf{Act}}}
\newcommand{\Proc}{{\textbf{Proc}}}
\newcommand{\Expr}{{\textbf{Expr}}}
\newcommand{\s}{{\mathbf{s}}}
\newcommand{\byp}[1]{\stackrel{ #1 }{{\longrightarrow}_p}}
\newcommand{\by}[1]{\stackrel{ #1 }{\longrightarrow}}
\newcommand{\C}{{\textbf{Ch}}}
\newcommand{\myloc}{{\sf myloc}}
\newcommand{\nil}{{\bf 0}}
\newcommand{\timed}{{\sf timed}}
\newcommand{\prob}{{\sf prob}}
\newcommand{\loc}[1]{{:}\langle{ #1 }\rangle}
\newcommand{\hide}{\backslash}
\newcommand{\calC}{{\mathcal C}}
\newcommand{\eqdef}{\stackrel{{\rm def}}{=}}
\newcommand{\ol}{\overline}
\renewcommand{\a}{\alpha}
\newcommand{\psumi}{{\bullet\hspace{-0.115in}{\sum_{i\in I}}}}
\newcommand{\psumj}{{\bullet\hspace{-0.115in}{\sum_{j\in J}}}}
\newcommand{\psum}{{\bullet\hspace{-0.1in}{\sum}}}
\newcommand{\psumn}{{\bullet\hspace{-0.31in}{\sum_{\ell\in \Nb(\myloc)}}}}
\newcommand{\comment}[1]{}
\newtheorem{definition}{Definition}
\newtheorem{theorem}{Theorem}
\newtheorem{example}{Example}
\begin{document}

\title
{Mean-Field Semantics for a Process Calculus for Spatially-Explicit Ecological Models}

\author{Mauricio Toro \institute{Universidad Eafit,  Colombia
    \email{mtorobe@eafit.edu.co}}
\and Anna Philippou \institute{University of Cyprus, Cyprus
    \email{annap@cs.ucy.ac.cy}}
\and Sair Arboleda \institute{Universidad de Antioquia, Colombia
   \email{sairorieta@yahoo.es}}
\and Mar\'{i}a Puerta  \institute{Universidad Eafit,  Colombia
    \email{mpuerta@eafit.edu.co }}
\and Carlos M. V\'{e}lez S. \institute{Universidad Eafit,  Colombia
    \email{cmvelez@eafit.edu.co }}
}

\def\titlerunning{Mean-Field Semantics for a Process Calculus for Ecological Models}
\def\authorrunning{M. Toro, A. Philippou, S. Arboleda, M. Puerta \& C. V\'{e}lez}

\maketitle

\pagestyle{plain}


\date{}

\begin{abstract}
We define a mean-field semantics for \spalps{}, a process calculus  for
spatially-explicit, individual-based modeling of ecological systems. The new
semantics of \spalps{} allows an interpretation
of the average behavior of a system as a set of recurrence equations.
Recurrence equations are a  useful approximation when dealing with a large
number of individuals, as it is the case in epidemiological studies.
As a case study, we compute a set of recurrence equations capturing the dynamics of an individual-based model
of the transmission of dengue in Bello (Antioquia), Colombia.
\end{abstract}

\section{Introduction}
The collective evolution of a group of individuals is of importance
in many fields; for instance, in system biology, ecology and epidemiology.
When modeling such systems, we want to know the emergent behaviors of the
whole population given a description of the low-level interactions of the individuals
in the system. As an example, in eco-epidemiology the focus is on the number of individuals
infected in a certain population  and how a small number of individuals infected may lead to an epidemic.

Eco-epidemiology can be seen as a particular case of \emph{population ecology}.
The main aim  of population ecology is to gain a better
understanding of population dynamics and make predictions about how
populations will evolve and how they will respond to
specific management schemes. In eco-epidemiology, such management schemes can be
 a cure to a disease, mechanisms to prevent a disease such as vaccines, or
mechanisms to prevent the \emph{vector} (species infected with a disease) to spread a disease.
To achieve these goals, scientists may construct models of ecosystems and management schemes (e.g., \cite{Yang2008}).

Various formalisms have been proposed in
the literature for the individual-based modeling of biological and ecological systems.
%
%
Examples of such formalisms include the calculus of looping
sequences~\cite{BarbutiMMT06} and its spatial extension
\cite{BMMP09}, cellular automata~\cite{FM04,CYL09}, Petri nets~\cite{KleijnKR11},
synchronous automata~\cite{DrabikMM11}, P systems~\cite{BCPM06,abs-1008-3301, PinnaS08}
and process calculi (or process algebras)~\cite{SBB01,MNS08}.

In our work,
we are interested in the application of process calculi for studying the
population dynamics of ecological systems. Process calculi are formal frameworks
to model and reason about concurrent systems and provide constructs to express
sequential and parallel composition of processes, as well as different means of communication
between processes.
In contrast to
the traditional approach to modeling ecological systems using ordinary
differential equations which describe a system in terms of changes in the population as a whole,
process calculi are suited towards the so-called ``individual-based''
 modeling of populations. Process calculi enable one to
describe the evolution of each individual of the population as a process and, subsequently, to compose a
set of individuals (as well as their environment) into a complete ecological system.
 Process calculi include features such as time \cite{timedpi, pntcc},
probability \cite{Tofts94} and stochastic behavior \cite{HillstonTG12}.
Furthermore, following a model construction, one can use model-checking tools for automatically analyzing
properties of the models (e.g., \cite{MNS10, TPSK14}) as opposed to just simulating trajectories, as it is typically
carried out in most ecological studies.

In a previous work, we presented \palps{} (Process Algebra with Locations for Population Systems), a process calculus
developed for modeling and reasoning about spatially-explicit individual-based systems~\cite{PTA13}.
In \palps{}, individuals
are modeled using discrete time and probabilistic behavior, and space is modeled as a graph of discrete locations.
We associated \palps{} with a translation to the probabilistic model checker \prism{} \cite{prism}
to perform more advanced analysis of ecological models.
Our experiments with  \palps{} ~\cite{PTA13,PT13} delivered promising results via the use of statistical
model checking provided by \prism{}. Nonetheless, the results also revealed a
limitation:  the interleaving nature of the parallel composition operator led to a high
level of nondeterminism and, thus, a very quick explosion of the state space. Moreover, the 
interleaving nature of \palps{} 
comes in contrast to the usual approach of modeling adopted by ecologists
where it is considered that execution evolves in phases during which
individuals of a population engage \emph{simultaneously} in some process, such as birth, dispersal and
reproduction.

To alleviate the problem of the interleaving nature of parallel composition and the
high degree of nondeterminism in \palps{}, we proposed  a new semantics of \palps{},
which captures more faithfully the synchronous evolution of populations and
removes as much unnecessary nondeterminism as possible. Our proposal
consisted of a synchronous extension of \palps{}, named \emph{synchronous \palps{}} (\spalps{}) \cite{TPSK14}.
The semantics of \spalps{} implements
the concept of \emph{maximum parallelism}: at any given time all individuals
that may execute an action will do so simultaneously~\cite{TPSK14}. Furthermore, we
proposed a new translation of \spalps{} to \prism{} which implemented this synchronous
semantics, as well as other features that removed the restrictions existing in the original
framework. This led to a significant improvement regarding the size of \spalps{} models that can
by analyzed via translation to \prism{} in the range of hundreds of individuals. However, in epidemiological systems, components can number in millions.

To deal with this challenge, in this paper, we present a \emph{mean-field
semantics} to represent the average behavior of \spalps{} systems for populations
of potentially millions of individuals. Mean-field semantics gives a deterministic
approximation of the average behavior of a system, given low-level specifications at
the individual level in terms of discrete-time and discrete-space mean-field equations.
In this work we propose an algorithm of polynomial-time complexity for producing
the mean-field equations given an \spalps{} model of a system. The algorithm avoids computing the complete state-space
of a system and its complexity is independent of the size of the populations.
Rather, given the stochastic nature of the systems in question, the accuracy
of the method relies on the fact that the numbers of each agent in the system are sufficiently large.
We illustrate the application of our semantics for the construction of mean-field equations
of an \spalps{} model of the transmission of dengue in Bello, Colombia.

Mean-field semantics have been proposed for a number of process calculi
including \textsc{pepa} \cite{HillstonTG12,Tribastone12} and \wsccs{} \cite{Tofts94,MNS10}.
The former line of work differs from our semantics since the underlying model is
continuous time. Instead, our work is closely related to that of~\cite{MNS10} for
\wsccs{}. Our semantics extends that of~\cite{MNS10} in two ways. First, we extend
the semantics of \spalps{} to deal with locations, since \spalps{}  includes an explicit notion of discrete space
not present in \wsccs. Second, the nature of our calculus and, specifically, the presence
of an explicit probabilistic operator, as opposed to weights, and the absence of nondeterminism
at the level of individuals, yields a simpler semantics, as well as the lifting of some of
the restrictions imposed in~\cite{MNS10}. As related work, we also mention the
mean-field semantics proposed for reactive networks in
\cite{reactiveNetworks} which, however, is not directly related with our aim of
providing this analysis capability to the spatially-explicit process calculus \spalps{}.


The remainder of the paper is as follows. In Section
\ref{sec:Calculus} we present the syntax and
the
semantics of \spalps{}. In Section~\ref{sec:MeanField} we present a mean-field
semantics for \spalps{}. We apply
our techniques to study the population dynamics of dengue  in Section \ref{sec:case-study}.
 Section \ref{sec:Conclusions} presents conclusions and future work.

\section{Synchronous \palps{}}
\label{sec:Calculus}

In \spalps{}, we consider a system as a set
of individuals operating in space, each belonging to a certain species
and inhabiting a location.
Individuals who reside at the
same location may communicate with each
other upon channels (e.g., for preying) or they
may migrate to a new location.
\spalps{}  models probabilistic events with the aid
of a probabilistic operator. 

The syntax of \spalps{} is based on the following basic entities: (1)
$\S$ is a set of species ranged over by $\s$, $\s'$. (2)
$\L$ is a set of locations ranged over by $\ell$, $\ell'$. The habitat
is then implemented via a relation $\Nb$, where $(\ell,\ell')\in \Nb$
exactly when  $\ell$ and $\ell'$ are neighbors. 
(3) $\C$ is a set of channels ranged over by lower-case strings.
 The syntax of \spalps{}  is given
at two levels, the individual
level ranged over by $P$ and  the system level ranged over by $S$ which are defined as follows:
\begin{eqnarray*}
  P &::=& \nil  \:\:|\:\: \eta.P \:\:|\:\: \psumi p_i{:}P_i \:\:|\:\: \gamma?\,(P_1,P_2)\:\:|\:\: 
            P_1| P_2\:\:|\:\: C
\\
  S &::=& \nil  \:\:\:|\:\:\: P\loc{\s, \ell, q} \:\:\:|\:\:\: S_1
  \,\|\,S_2 \:\:\:|\:\:\: S\hide L
\end{eqnarray*}
where $L\subseteq \C$, $I$ is an index set, $p_i\in(0,1]$ with
$\sum_{i\in I} p_i = 1$, $C$ ranges over a set of process constants
$\calC$, each with an associated definition of the form $C\eqdef P$,
and the actions that a process can perform are
\[
\eta ::=  a \:\:|\:\: \ol{a} \:\:|\:\:
           go\; \ell \:\:|\:\: \surd\,\;\;\;\;\;\;\;\;
\gamma  ::= a \:\:|\:\: \ol{a} \]

Beginning with the \emph{individual level}, $P$ can be one of the
following:

\begin{itemize*}
\item Process $\nil$ represents the inactive individual, that
is, an individual who has ceased to exist.

\item Process $\eta. P$
describes the action-prefixed process which executes action $\eta$
before proceeding as $P$. An activity $\eta$ can be an
input action on a channel $a$, written simply as $a$; an output action on a channel $a$, written as $\ol{a}$; a
movement action to location $\ell$, $go\, \ell$; or the tick
action $\surd$ that indicates a discrete-time unit on a global tick action, $\surd$. Actions of the form $a$, and $\ol{a}$,
$a\in\C$, are used to model activities performed by an
individual; for instance, preying and reproduction. 

\item Process $\psum_{i\in I} p_i{:} P_i$ 
represents the probabilistic choice between processes $P_i$, $i\in
I$. The process randomly selects an index $i\in I$ with probability
$p_i$, and then evolves to process $P_i$. We write $p_1{:}P_1 \oplus
p_2{:}P_2$ for the binary form of this operator.

\item Process $\gamma?\,(P_1,P_2)$ depends on the availability of a
communication on a certain channel as described by $\gamma$.
 If a communication is available according to $\gamma$ then the communication
 is executed and
the flow of control proceeds according to $P_1$. If not, the
process proceeds as $P_2$. This operator is a deterministic operator
as, in any scenario, the process $\gamma?\,(P_1,P_2)$ proceeds as either
$P_1$ or $P_2$ but not both, depending on the availability of the
complementary action of $\gamma$ in the environment in which the
process is running. 

\end{itemize*}

Moving on to the \emph{population level}, population systems are built by composing in parallel
sets of located individuals. A set of $q$ individuals of species $\s$
located at location $\ell$  is defined as
$P\loc{\s,\ell,q}$. In a composition
$S_1\| S_2$ the components may proceed while synchronizing on
their actions following a set of restrictions. These restrictions
enforce that probabilistic transitions take precedence over the
execution of other actions and that time proceeds synchronously in
all components of a system. That is, for $S_1\| S_2$ to execute
a $\surd$ action, both
$S_1$ and $S_2$ must be willing to execute $\surd$. Action $\surd$ measures a
tick on a global clock. These time steps are abstract in the sense
that they do not necessarily have a defined length and, in practice, $\surd$ is used to separate the rounds of
an individual's behavior. In the case of multi-species systems these actions
must be carefully placed in order to synchronize species with possibly different time scales.

System $S\hide L$ models the
restriction of channels in  $L$ within $S$.
This construct is important to define \emph{valid systems}: We define a
\emph{valid} system to be any process of the form $S\hide L$ where, for all of $S$'s  subprocesses
of the form $a?(P,Q)$ and $\ol{a}?(P,Q)$ we have that $a\in L$. Hereafter, we consider only
 valid systems.

\begin{example}
\label{MainExample}{\rm \ Let us consider
a species $\s$ where individuals cycle through a dispersal
phase followed by a
reproduction phase. Further, suppose that the habitat is a ring of size $m$ where
the neighbors of location $\ell$ are $\ell\pm 1$.
In \spalps{}, we may model $\s$ by $P_0$, where
\begin{eqnarray*}
P_0 &\eqdef &{\psumn}\frac{1}{2}:go\,\ell.\surd.P_1\quad\quad
P_1 \eqdef  p{:}\surd.(P_0 | P_0)\; \oplus
\;(1-p){:}\surd.(P_0 |P_0| P_0)
\end{eqnarray*}

According to the previous definition, during the dispersal phase, an
individual moves to a neighboring location which is chosen
probabilistically among the neighboring locations of the \emph{current
location} ($\myloc$) of the individual.  Subsequently, the flow of control
proceeds according to
 $P_1$ which models the probabilistic production
of one offspring (case of $P_0|P_0$) or two offspring (case of
$P_0|P_0|P_0$).  A system that contains two individuals at a
location $\ell$ and one at location $\ell'$ can be modeled as
\[\mathit{System} \eqdef P_0\loc{\s,\ell,2} |P_0\loc{\s,\ell',1}\,.\]
}
\end{example}





The semantics of \spalps{} is defined operationally via two transition relations, the \emph{non-deterministic
transition relation} and the \emph{probabilistic transition relation} yielding transition systems that
can be easily translated into \emph{Markov decision processes} \cite{Puterman1994}. The main
features of the semantics is that probabilistic transitions take precedence over all other actions
and that all processes must synchronize on timed actions. Finally, at any given time,
all individuals that may execute an action will do so simultaneously. A full account of the semantics
can be found in~\cite{TPSK14}.

\section{Mean-field semantics for \spalps{}}
\label{sec:MeanField}

Using the operational semantics of \spalps{}, we can study the transient
dynamics of a system: the time series evolution of the model. This is
necessary for the simulation of models that can be obtained by translating
\spalps{} to \prism{}. Using \prism{} it is also possible to use  model
checking or approximated model checking (i.e., statistical model checking).
Although this approach can be effective for \spalps{} models with fairly
large state spaces (consisting of populations in the range of a few hundreds
of individuals), the size of a  state space is exponential
in the number of components and locations, and, in epidemiological systems,
components can number in millions. To address this challenge, in this section
we develop a mean-field semantics for \spalps{} which can be used for reasoning
about systems with very large populations.
To compute the mean-field semantics,
we proceed in 3 steps: (1) compute the initial-state matrix, (2) compute the
state-transition table and (3) compute the mean-field equations. In particular,
we begin by assuming an  \spalps{} model of the form:
\[ \mathit{System} = (\Pi_{1\leq j \leq m}P_1\loc{\s_1,\ell_j,q_{1,j}} |  \ldots |\Pi_{1\leq j\leq m}P_n\loc{\s_n,\ell_j,q_{n,j}})\hide L\]
where $P_1,\ldots,P_n,$ is the set of all processes the populations may evolve into,
$\s_1,\ldots,\s_n\in\S $, $\ell_1,\ldots,\ell_m\in\L$ is the set of all locations
in the system,  and the $q_{i,j}\geq 0$ are the sizes of the population of individuals
at state $P_i$ at location $\ell_j$ where, if a location-state
pair $(P_i,\ell_j)$ for some species $\s$  is not present in the initial configuration then $\mathit{System}$
includes the component $P_i\loc{\s,\ell_j,0}$.

Restrictions for our method are the following: as usual, the numbers of the agents $q_i$ must
be sufficiently large and process constants must be guarded. That is, we do not allow
definitions of the form $C \eqdef P|C$, since these yield infinite-sized systems.



\paragraph{1. Initial-state matrix ($\mathit{Init}$). } This matrix, $\mathit{Init}$, captures the
initial configuration of the system under study by noting the number of individuals
of each type at each location. It is a matrix of size $n\times m$,
where $n$ is the number of all accessible process-states and $m$ the number of distinct locations in
the system
(see the definition of $\mathit{System}$ above) and it is obtained directly
from the definition of $\mathit{System}$. Specifically, $\mathit{Init}[i,j]=q$ where
$q$ is the number of individuals of state $P_i$ at location $\ell_j$.

\paragraph{2. State-transition table ($\mathit{STT}$). } This $3$-dimensional table, $\mathit{STT}$, shows how processes evolve from
one state to another and how their locations change. Each entry in the state-transition table is an expression that
captures the average evolution, after the execution of a single action of a
process $P$ at a location $\ell$ to some process $Q$ at a location $\ell'$.
This is expressed as a function of the size of the population of process $P$.
 Formally, matrix
$\mathit{STT}$ is of size $n\times n\times m$. We point out again that $n$ is the number of different states individuals may engage in and $m$ the number of locations in a system. Note that while these quantities may be big, if we are considering a detailed model of a system (many process states and large number of locations), typically, they are fairly small and, most importantly, they are independent of the size of the populations considered as well as the size of the system under consideration.
 This matrix
captures the evolution after one action step and not necessarily after a time unit,
since actions under study 
may include actions such as
$go\;\ell$, communication actions and probabilistic actions. 
The entries of the matrix are expressions that capture the number of processes of
a certain type that have evolved at
a location as a function of the number of various processes at different locations
that may evolve into the specific state-location pair, in the previous step of the
system.

In order to capture
the evolution in a manner compatible with the original \spalps{} semantics, we employ the following notions:
 $\timed(S)$ captures whether $S$ may engage in a timed action
    (all its active components may execute $\surd$);
 $\prob(S)$ captures whether $S$ may engage in a probabilistic actions (at least
     one of its components may execute a probabilistic action).
These notions are essential to capture that probabilistic actions
take precedence over all other actions and that $\surd$ actions may take place
only if all components of the system are willing to synchronize on a timed step.
In particular, given a system $S$, to construct the values of the transition matrix $\mathit{STT}$
capturing the evolution of its components $P_i\loc{\s_i,\ell_i,q_i}$ we use a function
$\oen\cdot\cen:\Proc\times\Act \rightarrow \mathcal{P}(\Expr:\Proc)$, where $\Proc$
is the set of all processes of the form $P\loc{\s,\ell,q}$ and $\Expr$ is an expression
capturing the evolution in question. In particular, given a process  $P\loc{\s,\ell,q}$
and an action  $\a$, $\oen P\loc{\s,\ell,q}, \a\cen$ returns a set of expressions
$e_i:P_i\loc{\s,\ell_i,q_i}$  capturing the set of processes in which $P\loc{\s,\ell,q}$ may
evolve and the concentration $e_i$ for each of these processes as a function of the concentration
of $P\loc{\s,\ell,q}$.
We proceed to define
this function.
We begin with the evolution according to probabilistic transitions and timed actions,
where we have:
\begin{eqnarray*}
\oen P\loc{\s,\ell,q}, a)\cen & = &\langle\rangle,\;\; \mbox { if } \prob(S)\\
\oen P\loc{\s,\ell,q}, prob \cen & = & \langle p_i\cdot P_t: P_i \loc{\s,\ell,q}|i\in I\rangle,\;\; \mbox {if } P=\psumi p_i{:}P_i\\
\oen P\loc{\s,\ell,q} , \surd \cen & = &\langle P_t: P'\loc{\s,\ell,q}\rangle,\;\; \mbox { if } P = \surd .P'  \mbox{ and } \timed(S)\\
\oen P\loc{\s,\ell,q} , \surd \cen & = &\langle \rangle,\;\; \mbox { if } P = \surd .P'  \mbox{ and } \lnot\timed(S)
\end{eqnarray*}

Thus, no communication on channel $a$ may take place if a process occurs within a system
satisfying $\prob(S)$. Similarly, a $\surd$ action may not take place if the process
does not occur within a system satisfying $\timed(S)$. On the other hand,
probabilistic transitions may take place freely and so do $\surd$ actions within
timed systems. Note that in the above, we write $P_t$ for the number of agents $P$
at step $t$.

Moving on to the execution of a movement action, we define:
\begin{eqnarray*}
\oen P\loc{\s,\ell,q} , \tau_{go,\ell} \cen & = &
        \langle P_t:P'\loc{\s,\ell',q}\rangle,\;\; \mbox { if } P = go\,\ell' .P'
\end{eqnarray*}
This leaves us with the execution of channel-based actions where we distinguish the following cases:

\begin{itemize}
\item
If $P = \eta.P'$, where $\eta\in\{a,\ol{a}\}$ and $a\not\in L$, that is $a$ does not belong to the set of restricted channels, then we have
\begin{eqnarray*}
\oen {P\loc{\s,\ell,q} , \eta} \cen & = &
        \langle P_t:P'\loc{\s,\ell,q}\rangle,\;\; \mbox { if } P = \eta .P'
\end{eqnarray*}

\item If  $P = \eta.P'$ where $\eta\in\{a,\ol{a}\}$ and $a\in L$, then
the number of agents evolving to $P'$ depends on the number of agents co-located with
$P$ and available to execute action $\eta$ and the complementary action $\ol{\eta}$.
Let us write $X_t$ for the number of co-located agents able to
execute $\eta$ and $Y_t$  for the number of co-located agents able to execute the complementary
action $\ol{\eta}$. If $Y_t\geq q + X_t$ then all agents of type $P$ will proceed to state
$P'$. If not, then the mean change in agent $P$ is expressed as
\[\frac{\sum_{k=1}^{q}k\dbinom{q}{k}\dbinom{X_t}{Y_t-k}}{\sum_{k=1}^{q}\dbinom{q}{k}\dbinom{X_t}{Y_t-k}}\]
This term can be simplified using \emph{Vandermonde's Convolution} and standard theory regarding the binomial coefficient to
$\frac{q\cdot Y_t}{X_t}$  \cite{Graham1994}. Thus, we have:
\begin{eqnarray*}
\oen {\eta.P\loc{\s,\ell,q} , \eta} \cen & = &
        \langle \min(q, \frac{q\cdot Y_t}{X_t}):P'\loc{\s,\ell,q}\rangle
\end{eqnarray*}

\item
Finally, we have to consider the evolution of a $P=\gamma?(P_1,P_2)$ process. In such
processes, we know that $\gamma\in\{a,\ol{a}\}$ where $a\in L$. Thus, the evolution
is similar to the previous case. The point in which this case differs is when there
is not a sufficient number of collaborating agents to provide the complementary
$\ol{\gamma}$ actions. In such a case, a number of instances of the
 process will evolve to $P_2$ thus, giving:
\begin{eqnarray*}
\oen {\gamma?(P_1,P_2)\loc{\s,\ell,q} , \eta} \cen & = &
        \langle \min(q, \frac{q\cdot Y_t}{X_t}):P_1\loc{\s,\ell,q},(q-\min(q, \frac{q\cdot Y_t}{X_t})):P_2\loc{\s,\ell,q}\rangle
\end{eqnarray*}
\end{itemize}
\paragraph{3. Mean-field equations (\textsc{mfe}s). } Using the state-transition
table and the initial-state matrix, we can derive a set of recurrence equations
 that represent the mean-field semantics of a  system.  The system of recurrence
equations contains one variable for each different process and at each location
in a system. A
variable $P_i(t)@\ell_j$ represents the mean number of individuals of process
$P_i\loc{\s,\ell_j,m'}$, at time $t$ and location $\ell_j$ and a variable $P(t-1)@\ell_j$
represents the number of individuals of process $P_i$ at location $\ell_j$ during
time $t-1$.  Formally, $P_i(t)@\ell_j$ is defined by

\begin{center}
$P_i(t)@\ell_j = \left\{
\begin{array}{c l}
 \mathit{Init}[i,j] & t = 0 \\
 \sum_{1\leq k \leq n} \mathit{STT}[k][i][\ell_j] & \text{otherwise}
\end{array}
\right.
$
\end{center}

In the first case, for $t=0$, the value is obtained from the initial-state
matrix. The second case, for $t > 0$, the value is obtained from state-transition
from all the processes  in the system. According to the state-transition table,
processes that do not derive into $P'\loc{\s,\ell,m'}$ are said to produce $0$
individuals of process $P'\loc{\s,\ell,m'}$ in the next time unit.

 Algorithm \ref{MFEalgorithm} is the pseudocode to construct the state-transition table $\mathit{STT}$. For simplicity, we assume that the entries
 of the state-transition table are expressions over the number of individuals of each process.

\begin{algorithm}
\caption{Algorithm to compute the state-transition table (\textsc{stt}) of a \spalps{} model}
\label{MFEalgorithm}
\begin{algorithmic}[1]
\Procedure{Compute\_\textsc{stt}} {\textit{System}}
\State $\mathit{STT}$ $=$ matrix of dimension $n \times n \times m$ initialized with $0$


  \For{each $P_i, \ell_j \in \textit{System}$ }
  \If{ $P\texttt{==}\ \ \psumj p_j{:}P_j$} $\mathit{STT}[i][j][\ell_j] \texttt{+=}   p_j\cdot P_i@(t-1), \forall j\in J$
  \ElsIf{ $P_i \texttt{==} \surd.P_j$ and $\timed(\textit{System})$} $\mathit{STT}[i][j][\ell_j] \texttt{+=}  P_i@(t-1)$


  \ElsIf{$ P_i \texttt{==} go\,\ell_k .P_j$ } $\mathit{STT}[i][j][\ell_k] \texttt{+=}  P_i@(t-1)$

  \ElsIf{$\eta \texttt{==} a$ or $\eta \texttt{==} \ol{a}$, $a\not\in L$ and $P_i \texttt{==} \eta.P_j$} $\mathit{STT}[i][j][\ell_j] \texttt{+=} P_i@(t-1)$
  \ElsIf {$\eta \texttt{==} a$ or $\eta \texttt{==} \ol{a}$, $a\in L$ and $P_i \texttt{==} \eta.P_j$}
        \State $X_{t-1} = $ number of co-located agents executing $\eta$
        \State $Y_{t-1} = $ number of co-located agents executing $\ol{\eta}$
        \State $\mathit{STT}[i][j][\ell_j] \texttt{+=} \min(P_i@(t-1), \frac{P_i@{t-1}\cdot Y_{t-1}}{X_{t-1}})$
        \State $\mathit{STT}[i][i][\ell_j] \texttt{+=} P_i@(t-1)- \frac{P_i@{t-1}\cdot Y_{t-1}}{X_{t-1}})$
  \ElsIf {$\eta \in \{a, \ol{a}\}$ and $a \in L$ and $P_i \texttt{==} \gamma?(P_j,P_k)$}
        \State $X_{t-1} = $ number of co-located agents executing $\eta$
        \State $Y_{t-1} = $ number of co-located agents executing $\ol{\eta}$
        \State $\mathit{STT}[i][j][\ell_j] \texttt{+=} \min(P_i@(t-1), \frac{P_i@{t-1}\cdot Y_{t-1}}{X_{t-1}})$
        \State $\mathit{STT}[i][k][\ell_j] \texttt{+=} P_i@(t-1)- \frac{P_i@{t-1}\cdot Y_{t-1}}{X_{t-1}})$
  \EndIf
\EndFor
\EndProcedure
\end{algorithmic}
\end{algorithm}

\begin{theorem}
 The complexity of Algorithm \ref{MFEalgorithm}
 is $O(n^2\cdot m)$ where $n$ is the number of different  states individuals may engage in and $m$ the number of different locations in the system.
\end{theorem}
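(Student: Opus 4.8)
The plan is to bound separately the cost of the table initialization (line~2) and the cost of the main \textbf{for} loop, and to check that neither exceeds $O(n^2\cdot m)$. Line~2 allocates and zero-initializes an $n\times n\times m$ array, which costs $\Theta(n^2\cdot m)$; this alone already realizes the claimed bound, so it only remains to show that the loop is no more expensive.

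For the loop, I would first argue that it performs at most $n\cdot m$ iterations. By the convention adopted for $\mathit{System}$ in this section, every pair $(P_i,\ell_j)$, with $P_i$ among the $n$ accessible process-states and $\ell_j$ among the $m$ locations, contributes exactly one component $P_i\loc{\s,\ell_j,q_{i,j}}$ (with $q_{i,j}=0$ when the pair is absent from the initial configuration). Hence the index of the loop ranges over a set of size exactly $n\cdot m$.

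Next I would bound the work of a single iteration by $O(n)$ by inspecting the guards. The cases for $\surd.P_j$, for $go\,\ell_k.P_j$, and for a prefix $\eta.P_j$ with $\eta\in\{a,\ol a\}$ and $a\notin L$ each perform a constant number of additions to entries of $\mathit{STT}$, hence cost $O(1)$. The probabilistic-choice case $P_i=\psumj p_j{:}P_j$ performs one update for each $j\in J$, and since the summands are drawn from the $n$ accessible states this is $O(n)$ (and remains $O(n)$ even allowing repeated branches, since these can be coalesced). The two channel-communication cases (a prefix $\eta.P_j$ and a conditional $\gamma?(P_j,P_k)$ with $a\in L$) additionally compute $X_{t-1}$ and $Y_{t-1}$ by scanning the agents co-located with $P_i$ at $\ell_j$ and accumulating the populations of those able to fire $\eta$, respectively $\ol\eta$; there are at most $n$ co-located states, so this is $O(n)$, followed by a constant number of updates. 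Therefore each iteration is $O(n)$, the loop costs $O(n\cdot m)\cdot O(n)=O(n^2\cdot m)$, and together with initialization the procedure is $O(n^2\cdot m)$.

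The only point needing a word of care — and there is no real obstacle beyond it — is the cost model for the entries of $\mathit{STT}$: these are symbolic expressions in the variables $P_i@(t-1)$ rather than numbers, and the algorithm as stated treats appending a summand to an entry as a unit operation. If one instead charges $O(n)$ for manipulating an entry of symbolic size $O(n)$, the per-iteration bound is still $O(n)$ (the update cost is already absorbed there, up to a constant), so the conclusion is unchanged. It is also worth noting explicitly what makes the bound meaningful: $n$ and $m$ do not depend on the population sizes $q_{i,j}$, so the stated complexity is genuinely independent of the number of individuals the model describes, as claimed in the surrounding discussion.
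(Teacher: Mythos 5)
Your proof is correct, but it decomposes the cost differently from the paper. The paper's argument is a two-line remark: the loop fills the positions of the $n\times n\times m$ matrix, and, \emph{assuming a preprocessing step that supplies the sets of co-located processes able to execute $\eta$ and $\ol{\eta}$}, each value is computed in constant time, so the total is $O(n^2\cdot m)$. You instead count $n\cdot m$ loop iterations (one per state--location pair, which is actually the more faithful reading of the pseudocode, since most of the $n^2\cdot m$ entries are never touched after initialization) and bound each iteration by $O(n)$, absorbing into that bound both the fan-out of the probabilistic-choice case and the scan needed to compute $X_{t-1}$ and $Y_{t-1}$ --- precisely the work the paper sweeps into its unexamined preprocessing assumption. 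You also charge explicitly for the $\Theta(n^2\cdot m)$ zero-initialization, which the paper does not mention but which by itself already attains the stated bound, and you flag the symbolic-expression cost model. The net effect is the same bound obtained with strictly fewer hidden assumptions; the paper's version is shorter but leans on an unspecified preprocessing phase whose own cost is never accounted for, whereas yours is self-contained.
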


\begin{proof}
The algorithm consists of a loop that fills in the positions of the $3$-dimensional matrix
$\mathit{STT}$. Assuming that by some preprocessing we have the set of processes executing
the actions $\eta$ and $\ol{\eta}$ of interest, each value of the matrix can be computed at
constant time, yielding the result. 


\end{proof}



\begin{example}
\label{MainExampleMF}\rm \ In what follows we present the mean-field semantics of Example \ref{MainExample}.
In this example, we have a total of $7$ process states: $R_1 = P_0$, $R_2 = go\,(\myloc+1).R_4$, $R_3 = go\,(\myloc-1).R_4$,
$R_4 = \surd.P_1$, $R_5 = P_1$,
$R_6 = \surd.(P_0 | P_0)$ and $R_7 = \surd.(P_0 |P_0| P_0)$.
Furthermore, let us assume that the habitat is a ring of size $4$.

\begin{enumerate}
\item  In the initial-state, there are
$2$ individuals of process $P_0$ at location $1$ and $1$ individual of process
$P_0$ at location $2$. This is represented in the following $4\times 7$ initial-state matrix:
\[[[2, 0, 0, 0, 0, 0,0],[1,0,0,0,0,0,0],[0,0,0,0,0,0,0],[0,0,0,0,0,0,0]]\]

\item Using the methodology defined above, the transition matrix for the example is given below.
Note that in fact this is a $3$-dimensional matrix, the third dimension
being the location dimension. To capture this in two dimensions we write $@\ell'$ whenever
the resulting individuals have moved to another location, $\ell'$ being this location.

\begin{center}
\begin{tabular}{|c | c | c | c | c | c | c | c|c|}
\hline
&  & $R_1$ & $R_2$ & $R_3$ & $R_4$ & $R_5$ & $R_6$ & $R_7$ \\
\hline
$R_1\loc{\s,\ell,q}$ & $prob$ & &$\frac{1}{2}\cdot q$ & $\frac{1}{2}\cdot q$  & & & & \\
\hline
$R_2\loc{\s,\ell,q}$ & $go$ & & & & $q@(\ell+1)$  & & & \\
\hline
$R_3\loc{\s,\ell,q}$ & $go$ & & & & $q@(\ell-1)$  & & & \\
\hline
$R_4\loc{\s,\ell,q}$ & $\surd$ & & & & & $q$ & &  \\
\hline
$R_5\loc{\s,\ell,q}$ & $prob$ & & & & & & $p\cdot q$ & $(1-p)\cdot q$ \\
\hline
$R_6\loc{\s,\ell,q}$ & $\surd $ & $2\cdot q$ & & & & & &  \\
\hline
$R_7\loc{\s,\ell,q}$ & $\surd $ & $3\cdot q$ & & & & & &  \\
\hline
\end{tabular}
\end{center}
\item Let us write $P(t)@\ell$ for the average number of individuals of
 process $R$, at time $t$ and location $\ell$. This can be computed as follows:
 $R_i(t)@\ell $ is equal to $2$ if $i,t,\ell=1,0,1$, equal to $1$ if $i,t,\ell=1,0,2$, and $0$, if $t=0$ and $i\neq 1$ or $\ell\not\in\{1,2\}$,
 whereas if $t>0$ we have:
 \begin{eqnarray*}
 R_1(t)@\ell & = & 2\cdot R_6(t-1)@\ell + 3\cdot R_7(t-1)@\ell\\
 R_2(t)@\ell & = & \frac{1}{2}\cdot R_1(t-1)@\ell\\
 R_3(t)@\ell & = & \frac{1}{2}\cdot R_1(t-1)@\ell\\
 R_4(t)@\ell & = & R_2(t-1)@(\ell+1) + R_3(t-1)@(\ell-1)\\
 R_5(t)@\ell & = & R_4(t-1)@\ell\\
 R_6(t)@\ell & = & p\cdot R_5(t-1)@\ell\\
 R_7(t)@\ell & = & (1-p)\cdot R_5(t-1)@\ell
\end{eqnarray*}

By manipulating the equations and restricting attention to how the system
evolves between $\surd$ actions, we obtain:
\begin{eqnarray*}
R_1(t)@\ell & = & 2\cdot p\cdot  R_5(t-2)@\ell + 3\cdot(1-p)\cdot R_5(t-2)@\ell\\
R_5(t)@\ell & = & \frac{1}{2}\cdot R_1(t-3)@(\ell+1)+\frac{1}{2}\cdot R_1(t-3)@(\ell-1)
\end{eqnarray*}

\end{enumerate}
\end{example}

\section{Correctness of the mean-field semantics}
\label{correctness}

In this section, we prove the correctness of our mean-field semantics by
establishing the relation between the derived mean-field equations and
the \spalps{} semantics. To achieve this, we first define an encoding
of the operational semantics of \spalps{} into a \emph{discrete-time Markov chain}
(\textsc{dtmc}). Then we show that the recurrence equations obtained from
the mean-field semantics of \spalps{} are equivalent to the recurrence equations
obtained from \textsc{dtmc}-semantics of \spalps{}.
This follows a result of~\cite{Kurtz1970}, according to which it is possible to derive
\emph{ordinary differential equations (ODEs)} as an approximation of the average behavior of a \textsc{dtmc}. At the limit,
where the \textsc{dtmc} consists of infinitely many agents, the mean of the \textsc{dtmc}
is equivalent to the derived ODE's.

\paragraph{Derivation of a \textsc{dtmc} from a \spalps{} system.}

The semantics of \spalps{} is given operationally via a structural operational semantics~\cite{TPSK14}.
This semantics is given in terms of two transition relations, a non-deterministic
transition relation and a probabilistic transition relation, which give rise
to labeled transition systems that present both non-deterministic and probabilistic states.
In the context of this work, we present a method for interpreting such transition systems as
a \textsc{dtmc}, under an abstract bisimulation. Essentially, an
abstract bisimulation is an equivalence relation that allows us to disregard the structure
of the non-deterministic choices and just
look at the probabilities of reaching any particular state. This approach to derive a
\textsc{dtmc} from a probabilistic process calculus, was proposed first by Tofts for process
calculus \wsccs{} \cite{Tofts94}.

To begin with, we define an abstract notion of evolution as follows:
\[P\by{\beta[p]}P' \mbox{ if an only if }
    P\byp{p_1}P_1\byp{p_2}\ldots \byp{p_{n-1}}P_{n-1}\byp{p_n}P_n\by{\beta}P'
    \mbox{ where } p = \prod_{1\leq i \leq n} p_i\]
We lift this notion to evolution into a set of processes as follows, where $S$ is a set of \spalps{} processes:
\[P\by{\beta[p]}S \mbox{ if an only if }  p = \sum\{p_i\mid P_i\by{\beta[p_i]} P, P\in S\}\]

We may now define the notion of abstract bisimulation as follows:
\begin{definition}\label{equivalenceRelation} {\rm \
An equivalence relation ${\cal R} \subseteq \Pr \times \Pr$ is an \emph{abstract bisimulation}
if $(P,Q) \in {\cal R}$ implies that for all equivalence classes $S \in \Pr/{\cal R}$, actions
$\beta$, and for all $p \in [0,1]$,
$P  \by{\beta[p]} S$ if and only  $Q  \by{\beta[p]} S$.

We say that two processes are \emph{abstract bisimulation equivalent}, written $P \sim Q$, if there
exists an abstract bisimulation ${\cal R}$ such that $(P,Q)\in{\cal R}$.

}
\end{definition}

This relation can be used to translate any \spalps{} system into a
\textsc{dtmc}: by building an abstract bisimulation on the set of states of
the system, we obtain a \textsc{dtmc} whose states are the equivalence classes of
the equivalences relation and the transition labels are $\beta[p]$,
as defined above.

We now turn to proving the relation between the \textsc{dtmc} semantics and the
mean-field semantics of \spalps{}. To do this we refer to~\cite{Kurtz1970} where limit theorems were
presented relating the mean of Continuous Time
Markov Chains and Discrete Time Markov Chains  to ordinary differential equations. 
In particular~\cite{Kurtz1970} shows that,
at the limit, where a DTMC consists of infinitely many agents, the mean of the Markov chain
is equivalent to a derived set of
ODEs. An intermediate step of Kurtz's proof produces terms equivalent to those 
of our mean-field semantics. We use this to show
the correctness of our semantics.
In particular, we refer to a result of~\cite{Kurtz1970} capturing the conditions under
which the limit theorem holds and then verify that these conditions apply to 
\spalps\ models. The theorem states the following:

\begin{theorem} \label{KurtzTheorem}{\rm\ \ 
Let $X_n(k)$ be a sequence of discrete time Markov processes with measurable state spaces
$(E_n, {\cal B}_n)$, $E_n\in {\cal B}^k$, the Borel sets in $\mathbb{R}^k$ and
one step transition functions denoted by
\[\mu_n(x,\Gamma) = P\{X_n(k+1) \in \Gamma | X_n(k) = x \}\]
Suppose there exist sequences of positive numbers $\alpha_n$ and $\epsilon_n$
\[\lim\limits_{x\to\infty} \alpha_n = \infty \mbox{ and }\lim_{x\to\infty} \epsilon_n = 0\]
such that
\[\sup_{n} \sup\limits_{x \in  E_n} \alpha_n \int\limits_{E_n} |z-x| \mu_n (x,dz) < \infty\]
and
\[\lim\limits_{x\to\infty} \sup\limits_{x \in E_n} \alpha_n \int\limits_{|z-x| > \epsilon_n} |z-x| \mu_n (x,dz) = 0\]
Then, for every $\delta > 0$, $t > 0$
\[\lim\limits_{n \to \infty} \sup\limits_{x \in E_n} P \{ \sup\limits_{k \leq \alpha_n t} | X_n(k) - X_n(0) - \sum_{l=0}^k \frac{1}{\alpha_n} F_n(X_n(l)) | > \delta \mbox{ where } X_n(0) = x\} = 0\]
\noindent
where $F_n(x) = \alpha_n \int_{E_n} (z-x)\mu_n(x,dz)$.

}\end{theorem}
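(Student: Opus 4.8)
The plan is to prove Theorem~\ref{KurtzTheorem} by the standard martingale argument underlying all Kurtz-type limit theorems; since the statement is a verbatim restatement of a result of~\cite{Kurtz1970}, in the paper itself one would simply cite that reference, but the argument runs as follows. Fix $n$, a starting point $x\in E_n$ and $t>0$, and let $\mathcal{F}_k$ be the filtration generated by $X_n(0),\ldots,X_n(k)$. Writing $\Delta X_l = X_n(l+1)-X_n(l)$, the one-step transition function gives $\mathbb{E}[\Delta X_l\mid\mathcal{F}_l]=\int_{E_n}(z-X_n(l))\,\mu_n(X_n(l),dz)=\tfrac{1}{\alpha_n}F_n(X_n(l))$. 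Hence
\[
M_n(k)\;=\;X_n(k)-X_n(0)-\sum_{l=0}^{k-1}\tfrac{1}{\alpha_n}F_n(X_n(l))\;=\;\sum_{l=0}^{k-1}\bigl(\Delta X_l-\mathbb{E}[\Delta X_l\mid\mathcal{F}_l]\bigr)
\]
is an $(\mathcal{F}_k)$-martingale started at $0$; the sum appearing in the theorem differs from the one above by the single term $\tfrac1{\alpha_n}F_n(X_n(k))$, whose absolute value is bounded by $C/\alpha_n$ with $C=\sup_n\sup_{x\in E_n}\alpha_n\int_{E_n}|z-x|\,\mu_n(x,dz)<\infty$, and therefore vanishes uniformly. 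So it suffices to bound $P\{\sup_{k\le\alpha_n t}|M_n(k)|>\delta\}$ uniformly in $x\in E_n$.

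Second, I would split each increment according to whether $|\Delta X_l|\le\epsilon_n$ (``small jumps'') or $|\Delta X_l|>\epsilon_n$ (``large jumps''), which decomposes $M_n=M_n^{\mathrm{s}}+M_n^{\mathrm{b}}$ into the two martingales obtained from the small- and large-jump parts of $\Delta X_l$ together with their respective conditional means. For $M_n^{\mathrm{b}}$, the second hypothesis gives $\eta_n:=\sup_{x\in E_n}\alpha_n\int_{|z-x|>\epsilon_n}|z-x|\,\mu_n(x,dz)\to 0$, so $\sup_{k\le\alpha_n t}|M_n^{\mathrm{b}}(k)|$ is dominated by the sum of the absolute large-jump increments plus the sum of their conditional means, whose expectation is at most $2\alpha_n t\cdot(\eta_n/\alpha_n)=2t\eta_n\to 0$; Markov's inequality then yields $\sup_{x}P\{\sup_{k\le\alpha_n t}|M_n^{\mathrm{b}}(k)|>\delta/2\}\to 0$. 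For $M_n^{\mathrm{s}}$, I would use Doob's $L^2$ maximal inequality: the conditional second moment of one centered small-jump increment is at most $\epsilon_n\cdot\mathbb{E}[|\Delta X_l|\mid\mathcal{F}_l]\le\epsilon_n C/\alpha_n$ by the first hypothesis, so the sum of conditional variances over the at most $\alpha_n t$ steps is $O(Ct\epsilon_n)$, giving $\sup_x P\{\sup_{k\le\alpha_n t}|M_n^{\mathrm{s}}(k)|>\delta/2\}\le (4/\delta^2)\,Ct\epsilon_n\to 0$.

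Third, a union bound over the two estimates gives $\lim_{n\to\infty}\sup_{x\in E_n}P\{\sup_{k\le\alpha_n t}|M_n(k)|>\delta\}=0$, which (after absorbing the negligible off-by-one term from the first step) is exactly the assertion of the theorem.

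The main obstacle I anticipate is the bookkeeping required to keep every bound uniform in both $x\in E_n$ and $n$ at once, and in particular checking that the small/large-jump split indeed produces genuine martingales so that Doob's inequality applies, and that the \emph{compensator} of the large-jump part is controlled by the second hypothesis rather than only the jumps themselves. A secondary technical point is that $E_n$ and the values $F_n(x)$ may be unbounded, so one must verify at each step that the integrals involved are finite — which is precisely what the two displayed integrability conditions guarantee, and which is why the later development of the paper must separately confirm that these two conditions hold for \spalps\ models.
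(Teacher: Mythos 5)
The paper does not actually prove this statement: it is quoted essentially verbatim from Kurtz's 1970 paper and used downstream as a black box, so there is no internal proof to compare yours against. Your sketch is the standard martingale argument behind Kurtz-type limit theorems --- centering by the compensator, absorbing the off-by-one term $\frac{1}{\alpha_n}F_n(X_n(k))$ via the uniform bound $C/\alpha_n$, splitting increments at the threshold $\epsilon_n$, controlling the large-jump part and its compensator by Markov's inequality together with the second hypothesis, and controlling the small-jump part by Doob's maximal inequality with the conditional-variance bound $\epsilon_n C/\alpha_n$ per step --- and it is correct as far as it goes, modulo the routine points you already flag (applying Doob coordinate-wise or to the submartingale $|M_n^{\mathrm{s}}|^2$ for the $\mathbb{R}^k$-valued case, and reading the misprinted limits $\lim_{x\to\infty}\alpha_n$, $\lim_{x\to\infty}\epsilon_n$ in the hypotheses as limits in $n$).
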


\paragraph{Proof of correctness. }

We recall, from Section \ref{sec:MeanField}, that restrictions are as usual
when dealing with mean field approximations: the numbers of the agents must
be sufficiently large and process constants must be guarded to avoid
infinite-sized systems. Essentially, this result proves that, assuming that the various conditions hold,
the difference between state changes in the Markov chain and the ones expressed
in the relevant ODE are in fact equivalent.
  \\

\begin{theorem}
Given $\mathit{System} = (\Pi_{1\leq j \leq m}P_1\loc{\s_1,\ell_j,q_{1,j}} |  \ldots |\Pi_{1\leq j\leq m}P_n\loc{\s_n,\ell_j,q_{n,j}})\hide L$ in \spalps{},
the system of recurrence equations with variables $P_1(t)@\ell_1 \dots P_1(t)@\ell_m,\ldots, P_n(t)@\ell_1, \dots P_n(t)@\ell_m$
represents the average behavior of the system at any discrete-time unit $t$.
\end{theorem}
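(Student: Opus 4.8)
The plan is to connect the mean-field recurrence equations of Section~\ref{sec:MeanField} with the \textsc{dtmc} obtained from $\mathit{System}$ via the abstract bisimulation of Definition~\ref{equivalenceRelation}, and then to invoke Theorem~\ref{KurtzTheorem}. Concretely I would proceed in three stages: (i) present the \textsc{dtmc} of a population of total size $n$ as a Markov process $X_n$ whose state is the vector of counts of individuals per process-state and location; (ii) compute the one-step expected drift $F_n$ of this chain and show that, coordinate by coordinate, it reproduces exactly the right-hand sides of the recurrence equations built from $\mathit{Init}$ and $\mathit{STT}$; and (iii) verify that $X_n$ satisfies the hypotheses of Theorem~\ref{KurtzTheorem}, so that Kurtz's limit theorem certifies that the trajectory generated by the recurrence tracks the mean of the \textsc{dtmc} as populations grow.

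For stage (i), a global state of $\mathit{System}$, modulo abstract bisimulation, is determined by the tuple $\langle q_{i,j}\rangle$ recording how many individuals are currently in state $P_i$ at location $\ell_j$; dividing by the total number of individuals places it in a bounded region $E_n\subseteq\mathbb{R}^{nm}$, and we take $\alpha_n$ to be that total number. Because process constants are guarded and the model has the shape assumed in the statement, $n$ and $m$ are finite and fixed. The one-step kernel $\mu_n(x,\cdot)$ is read off from the maximal-parallelism step of the \spalps{} semantics: each located group fires simultaneously and independently, probabilistic choices $\psumi p_i{:}P_i$ are resolved by independent draws, and channel synchronisations on restricted names are resolved by the random matching of available complementary actions that underlies the expression $\min(q,\frac{q\cdot Y_t}{X_t})$.

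Stage (ii) is the heart of the argument and amounts to a case analysis mirroring the definition of $\oen\cdot\cen$. For a $\surd$-prefixed process within a timed system, a $go\,\ell'$-prefixed process, or an $\eta.P'$ process with $\eta$ on an unrestricted channel, the step is deterministic and contributes the full count $q$ to the target entry, matching $\mathit{STT}$; a probabilistic choice $\psumi p_i{:}P_i$ distributes the $q$ agents multinomially over the branches, so the expected number taking branch $i$ is $p_i\cdot q$, again matching the table; and for an $\eta.P'$ with $\eta$ restricted, or a $\gamma?(P_1,P_2)$ process, the expected number of the $q$ agents that successfully synchronise against $X_t$ competitors and $Y_t$ co-agents is precisely the conditional expectation $\sum_{k=1}^q k\binom{q}{k}\binom{X_t}{Y_t-k}\big/\sum_{k=1}^q\binom{q}{k}\binom{X_t}{Y_t-k}$, which collapses by Vandermonde's convolution to $\frac{q\cdot Y_t}{X_t}$ (capped at $q$, with the residual flowing to $P_2$ in the $\gamma?(\cdot,\cdot)$ case), exactly as recorded in $\mathit{STT}$. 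Summing over all blocks that may feed a given pair $(P_i,\ell_j)$ therefore yields $\sum_{1\le k\le n}\mathit{STT}[k][i][\ell_j]$; since the $t=0$ entry is $\mathit{Init}[i,j]$ by construction, an induction on $t$ identifies the solution of the mean-field equations with the deterministic trajectory $X_n(0)+\sum_{l<t}\frac{1}{\alpha_n}F_n(X_n(l))$ appearing in Theorem~\ref{KurtzTheorem}.

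For stage (iii), the hypotheses of Theorem~\ref{KurtzTheorem} hold because $E_n$ is bounded and one maximal-parallelism step changes each individual's contribution by a bounded amount, so $|z-x|\le c/\alpha_n$ with $c$ depending only on $n$ and $m$; hence $\alpha_n\int_{E_n}|z-x|\,\mu_n(x,dz)$ is uniformly bounded, and choosing $\epsilon_n\to 0$ more slowly than $c/\alpha_n$ makes the truncated integral over $|z-x|>\epsilon_n$ eventually vanish. Kurtz's theorem then gives, for every finite horizon, uniform closeness (with probability tending to $1$ as $\alpha_n\to\infty$) between $X_n$ and the recurrence trajectory, which is exactly the asserted statement that the equations represent the average behaviour at each discrete time $t$. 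The main obstacle is the synchronisation case in stage (ii): one must pin down precisely the probabilistic model by which $q$ competing agents are matched to the available complementary actions so that the displayed ratio really is the correct conditional mean induced by the operational semantics, and this is also the only place where replacing random counts by their expectations is not exact and genuinely requires the large-population limit supplied by Theorem~\ref{KurtzTheorem}.
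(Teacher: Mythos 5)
Your proposal follows essentially the same route as the paper's proof: derive a \textsc{dtmc} from the operational semantics via abstract bisimulation, verify the hypotheses of Theorem~\ref{KurtzTheorem} (rescaling by the population size $\alpha_n$, boundedness of the drift integral, vanishing of the truncated integral), and identify $\int_{E_n}(z-x)\mu_n(x,dz)$ with the right-hand sides of the recurrence equations built from $\mathit{Init}$ and $\mathit{STT}$. Your stage~(ii) case analysis matching the drift to each clause of $\oen\cdot\cen$ is in fact spelled out more explicitly than in the paper, which asserts that equivalence in a single sentence, but the underlying argument is the same.
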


\begin{proof}
We show that the recurrence equations obtained from the mean-field semantics of \spalps{} are
equivalent to the recurrence equations obtained from the  mean-field approximation of
the derivation of \spalps{}'s operational semantics into a \textsc{dtmc}.

To do this we must confirm that the condition of the theorem above are satisfied.
\begin{enumerate}
\item According to the theorem, we require a set of \textsc{dtmc} processes $X_n(k)$.
    Indeed, the states of an \spalps{} system is in $\mathbb{N}^k$, where $k$ is
    the number of different processes in the system times the number of locations.
    This is a consequence of the initial-state matrix defined in Section \ref{sec:MeanField}.
\item When considering processes over $\{0,1,\ldots,n\}$, Kurtz rescales such processes to $[0,1]$
    by dividing each element by $n$ and letting $n \to \infty$. Processes in a system of \spalps{}
    range over $\{0,1,\ldots,n\}$, where $n$ is the number of different processes in the system
    times the number of locations. They can also be rescaled to match the defined conditions.
\item According to the theorem, the one step transition function is represented by
$\mu_x(x,\Gamma) = P\{X_n(k+1) \in \Gamma | X_n(k) = x \}$.
If we consider the labelled-transition system of \spalps{} under abstract bisimulation,
the one step transition function can be extracted as
$\mu_n(P, \{P' | P \byp{\beta[p]} P'\}) = p$.
\item Finally, the theorem assumes that there exists sequence of positive numbers $\alpha_n$ and $\epsilon_n$ such that
$\lim\limits_{x\to\infty} \alpha_n = \infty$ and $\lim_{x\to\infty} \epsilon_n = 0$, and
\[\sup_{n} \sup\limits_{x \in  E_n} \alpha_n \int\limits_{E_n} |z-x| \mu_n (x,dz) < \infty\]
and
\[\lim\limits_{x\to\infty} \sup\limits_{x \in E_n} \alpha_n \int\limits_{|z-x| > \epsilon_n} |z-x| \mu_n (x,dz) = 0\]

In terms of \spalps{}, we can think of $x$ and $z$ being state matrices
with a component representing each type of process in the system at each
location. The term $|z-x|$ denotes the difference between the two states $x$ and $z$.
As $n \rightarrow \infty$, the number of states that can be
reached in one step becomes very large. Furthermore, there is higher probability of
moving to a state with  small change from the previous state when there are
lots of components since there are lots of ways to make that change.
Similarly, the states for which the change is high are less likely to occur.
In addition we have that, since process are rescaled to $[0,1]$, then
$0\leq |x-z| \leq 1$, and since $\mu(x,z)$ is a probability, we conclude
that $0\leq \mu(x,z)\leq 1$. Given the relation of probabilities and degree of
changes noted above, we conclude that $\int\limits_{E_n} |z-x| \mu_n (x,dz)\to 0$
 and that  as $n$ and $\a_n$  tend to infinity,
$\alpha_n \int\limits_{E_n} |z-x| \mu_n (x,dz) < \infty$ as required. Similarly,
we may verify the last relation by noting that it captures the states
for which $\mu(x,z) = 0$.

Thus, by Theorem \ref{KurtzTheorem}, for every $\delta>0$, $t>0$, we have
\[\lim\limits_{n \to \infty} \sup\limits_{x \in E_n} P \{ \sup\limits_{k \leq \alpha_n t} | X_n(k) - X_n(0) - \sum_{l=0}^k \frac{1}{\alpha_n} F_n(X_n(l)) | > \delta \mbox{ where } X_n(0) = x\} = 0\]
\noindent
where $F_n(x) = \alpha_n \int_{E_n} (z-x)\mu_n(x,dz)$.
Applying the above over a single time step we obtain
\[\lim\limits_{n \to \infty} \sup\limits_{x \in E_n} P \{ \sup\limits_{k \leq \alpha_n t} | X_n(1) - X_n(0) - \int_{E_n} (z-x)\mu_n(x,dz) | > \delta \mbox{ where } X_n(0) = x\} = 0.\]
This implies that as $n\to \infty$, the difference $ X_n(1) - X_n(0) - \int_{E_n} (z-x)\mu_n(x,dz)$ tends to $0$, therefore
 $X_n(1) = X_n(0) + \int_{E_n} (z-x)\mu_n(x,dz)$, and, since Markov process are memoryless, we conclude that
 \[ X_n(1) = X_n(0) + \int_{E_n} (z-x)\mu_n(x,dz)\]
So, finally, by noting that $\int_{E_n} (z-x)\mu_n(x,dz)$ is equivalent to the way the mean-field equations of Section~\ref{sec:MeanField} are constructed, the result follows.

\end{enumerate}
\end{proof}

\section{Case study: Population dynamics of dengue in Bello, Colombia}
\label{sec:case-study}

An interesting case study where system components can number in millions
is the eco-epidemiology of dengue. Dengue is a disease caused by a virus 
transmitted to humans by the bite of the \emph{Aedes aegypti} mosquito. 
To date, there is no available treatment nor specific vaccine for this disease.
Dengue is a serious public health problem in Colombia. During the last $10$ years, 
there were around $600,000$ cases, from which $9\%$ corresponded to aggravated 
forms of the disease \cite{PAHO2011}. Unfortunately, current programs to prevent
and control dengue in Colombia are insufficient \cite{PAHO2010}. In the Valley 
of Aburr\'{a} (Department of Antioquia), the city of Bello is one of the most 
affected by dengue. In Bello, dengue is endemic; the rate oscillated from $11.1$ to 
$427$ cases by $100,000$ inhabitants, during the years 2002-2009.

Given the endemic status of dengue in Bello, it is important to analyze the factors 
involved in the eco-epidemiology of the dengue disease. Previous results have shown the 
influence of environmental variables in the distributions of cases of the 
disease~\cite{Arboleda2011}. Ongoing work carried out by three of the authors,
Arboleda, Puerta and V\'{e}lez, aims to analyze the macro and micro climatic and 
population factors to determine cases of dengue in Bello\footnote{Research 
carried out within a project founded by Colombian research agency Colciencias: 
``Design and Computational Implementation of a Mathematical Model for the Prediction
 of Occurrences of Dengue''}. There is a disadvantage with such models: the models
are population models based on differential equations; it means, that they
analyze the average behavior of the populations, but it is not possible to know 
how low-level specifications at the individual level will affect the population 
behavior.

The model we present is an individual-based version of the model 
presented in \cite{Yang2008}. To establish the initial conditions 
for the model defined with respect to the human population, we adopted
a total population size of 403,235, as recorded for the urban area of Bello (Antioquia) in
2010 by the \emph{Colombian administrative department of statistics}
\footnote{http://www.dane.gov.co}. The size of the susceptible human 
population at the beginning of the last registered  epidemic was
estimated based on the risk map developed by Arboleda et al.~\cite{Arboleda2009},
in which the probability of infection was reported to be 0.3 in 2008 and 2009, 
with a standard deviation of 0.096; thus, it was
determined that the size of the susceptible human population should be between
244,402 and 321,734.
The initial condition considered for the infectious human population was the number of cases reported
at the beginning of the epidemic. There were six reported cases in Bello during the first week
of the epidemic (week 51 in 2009). Because of under-reporting concerns, which can affect up to 75\% of
the total number of cases \cite{burden2013}, we assumed that the initial number of infectious human individuals should
lie in the range of 6 to 24. In what follows we explain how we model the mosquitoes and the humans. 

\paragraph{Mosquitoes. } The aquatic phases of the mosquito's life cycle are described briefly. The egg,
larva and pupa states are represented by $E$, $L$ and $P$, respectively. Parameters $\sigma_e, \sigma_l, \sigma_p$  are the probabilities to change from egg to larva, from larva to pupa and from pupa to adult. Co-action  $\ol{\mathit{infect}}$ models when a mosquito infects a human. We only consider female mosquitoes in the model. When a mosquito infects a human, it may reproduce and produce 3 offspring; otherwise, it will die in the next time unit.
  Mosquitoes can migrate from one
district to another. 

\paragraph{Humans. } The dynamics of dengue transmission in human population is described by susceptible $(s)$, exposed $(e)$, infectious $(i)$, recovered $(r)$ and dead $(d)$ individuals. Parameter $\mu_h$ represents the probability of an infected human to die from the disease. Action  $\mathit{infect}$ represents the action of being infected by a mosquito. When a human being is infected by a mosquito, it will remain in the exposed state for 3 time units; afterwards, it will become infectious. For simplicity, we do not consider how humans may infect mosquitoes when the mosquitoes bite on infected humans and we do not consider how humans migrate among districts.

{\footnotesize
\begin{center}
\begin{tabular}{l l l l l l l}
$E$ &$\eqdef$ &$\; \sigma_e{:}\surd.L\; \oplus \;(1-\sigma_e){:}\surd.W_6 $& \hspace{0.6in}&$s$ &$\eqdef $&$\; \ol{\mathit{infect}}?(\surd . e, \surd . s) $\\
$L$ &$\eqdef$ &$\; \sigma_l{:}\surd.P\; \oplus \;(1-\sigma_l){:}\surd.W_6 $& &$ e $&$\eqdef $&$\; \surd. e_1$\\
$P$ &$\eqdef$ &$\; \sigma_p{:}\surd.W\; \oplus \;(1-\sigma_p){:}\surd.W_6 $& &$ e_1$ &$\eqdef$ &$\; \surd. e_2$\\
$W$ &$\eqdef$ &$ \mathit{infect}?\, ( W_3,\;\surd. W_1)$ & & $e_2 $&$\eqdef$ &$\; \surd. i$ \\
$W_1$ &$\eqdef$ &$  \sum _{\ell \in \Nb(myloc)} \frac{1}{\| \Nb(myloc) \|}:go\,\ell.W_4$ & &$ i$ &$\eqdef$ &$\; \mu_h{:}\surd . d\; \oplus \;(1-\mu_h){:}\surd . r $\\
$W_2$ &$\eqdef$ &$ \mathit{infect}?\, (W_3,\;\surd. W_5)$ & & $r $&$\eqdef$ &$ \surd. r$\\
$W_3$ &$\eqdef$ &$ \surd. W |E |E |E$ & & $d $& $\eqdef$ &$ \surd. d$\\
$W_4$ &$\eqdef$ &$ \surd.W2$ \\
$W_5$ &$\eqdef$ &$ \nil$

\end{tabular}
\end{center}
}







\paragraph{Mean-field semantics. }
In this case study, we have a total of $16$ process states: $E$, $L$, $P$, $W$, $W_1$ ... $W_5$, $s$, $e$, $e_1$, $e_2$, $i$, $r$, $d$, but
only $s, e, i, r, W, W_2$ are of interest. Furthermore, there are $11$ districts in Bello, Colombia, thus we have $11$ locations.

\begin{enumerate}

\item The initial state is represented by a $11 \times 16$ matrix. 

 
\item The state-transition table for the example is given below.
Note that in fact this is a $3$-dimensional matrix, the third dimension
being the location dimension. To capture this in two dimensions we write $@\ell'$ whenever
the resulting individuals have moved to another location, $\ell'$ being this location.

First, we present the state transitions for the mosquitoes.

{\scriptsize
\begin{center}
\begin{tabular}{|c | c | c | c | c | c | c | c|c|       c| c|}
\hline
&  & $E$ & $L$ & $P$ & $W$ & $W_1$ & $W_2$ & $W_3$     & $W_4$ & $W_5$ \\
\hline
$E\loc{\s,\ell,q}$ & $prob$ &  &$\sigma_e \cdot q$ &  & & & &         & & $(1 - \sigma_e) \cdot q$\\
\hline
$L\loc{\s,\ell,q}$ & $prob$ & & & $\sigma_l \cdot q$ &   & & &                  & & $(1 - \sigma_l) \cdot q$\\
\hline
$P\loc{\s,\ell,q}$ & $prob$ & & & & $\sigma_p \cdot q$  & & &                 & & $(1 - \sigma_p) \cdot q$\\
\hline
$W\loc{\s,\ell,q_W}$ & $infect$ & & & & & $q_W -m$ & &  $m$                & & \\
\hline
$W_1\loc{\s,\ell,q}$ & $prob$ & & & & & & &                 & $z$ & \\
\hline
$W_2\loc{\s,\ell,q_{W2}}$ & $infect $ &  & & & & & &         $n$         & & $q_{W2} - n$\\
\hline
$W_3\loc{\s,\ell,q}$ & $\surd $ &  $3 \cdot q$ & & & $q$ & & &                  & & \\
\hline
$W_4\loc{\s,\ell,q}$ & $\surd $ & & & & & & $q$ &                  & & \\
\hline
$W_5\loc{\s,\ell,q}$ & $\surd $ &  & & &  & & &                  & & $q$\\
\hline
\end{tabular}
\end{center}
}

Second, we present the state transtitions for the humans.

\begin{center}
\begin{tabular}{|c | c | c | c | c | c | c | c|c|}
\hline
&  & $s$ & $e$ & $e_1$ & $e_2$ & $i$ & $r$ & $d$ \\
\hline
$s\loc{\s,\ell,q_s}$ & $\ol{infect}$ & $p $ &$q_s - p$ &  & & & & \\
\hline
$e\loc{\s,\ell,q}$ & $\surd$ & & & $q$ &   & & & \\
\hline
$e_1\loc{\s,\ell,q}$ & $\surd$ & & & & $q$  & & & \\
\hline
$e_2\loc{\s,\ell,q}$ & $\surd$ & & & & & $q$ & &  \\
\hline
$i\loc{\s,\ell,q}$ & $prob$ & & & & & & $(1-\mu_h)\cdot q$ & $\mu_h \cdot q$ \\
\hline
$r\loc{\s,\ell,q}$ & $\surd $ &  & & & & & $q$ &  \\
\hline
$d\loc{\s,\ell,q}$ & $\surd $ &  & & & & & & $q$ \\
\hline
\end{tabular}
\end{center}

where $m =  min(q_w, \frac{q_s}{q_W+q_{W2}})$, $n = min(q_{W2}, \frac{q_s}{q_{W}+q_{W2}}) $, $p = min(q_s, q_W + q_{W_2})$ and \\ $z = \sum_{\ell' \in \Nb(\ell)} \frac{1}{\| \Nb(\ell') \|} \cdot q@\ell'$. Variable $z$ represents the mosquito's dispersal.

\item Let us write $P(t)@\ell$ for the average number of individuals of
 process $P$, at time $t$ and location $\ell$.
 In what follows, we describe $s(t)@\ell $,  $e(t)@\ell $ ... $W_2(t)@\ell$, for $t>0$, which represents the mean-field equations for the behavior of the mosquitoes and humans, respectively. 
 Note that, for each process of interest, we have $11$ equations, one for each location that represents each district in Bello. 
A detailed explanation on how we computed the mean-field semantics of the case study is in~\cite{mean-field-techreport}. 





By manipulating the equations and restricting attention to how the system
evolves between $\surd$ actions and the processes of interest, we obtain:

 \begin{eqnarray*}
 W(t)@\ell & = & \sigma_P \cdot \sigma_L \cdot \sigma_E \cdot 3 \cdot \Bigg( min \bigg( W(t-4)@\ell, \frac{W(t-4)@\ell \cdot s(t-4)@\ell}{W(t-4)@\ell + W_2(t-4)@\ell}\bigg)  \\
 & & + min \bigg( W_2(t-4)@\ell, \frac{W_2(t-4)@\ell \cdot s(t-4)@\ell)}{W(t-4)@\ell+W_2(t-4)@\ell} \bigg)      \Bigg)\\
 W_2(t)@\ell & = & \frac{1}{ \| \Nb(\ell') \|  } \cdot \Bigg(   W_1(t-3)@\ell' \\
 & & -  min \bigg(  W(t-3)@\ell', \frac{W(t-3)@\ell' \cdot s(t-3)@\ell'} {W(t-3)@\ell' + W_2(t-3)@\ell' }  \bigg)      \Bigg)\\
 s(t)@\ell & = & min(s(t-1)@\ell, W(t-1)@\ell + W_2(t-1)@\ell\\
 e(t)@\ell & = & e(t-1)@\ell - min(s(t-1)@\ell, W(t-1)@\ell + W_2(t-1)@\ell\\
 i(t)@\ell & = & e(t-3)@\ell\\
 r(t)@\ell & = & r(t-1)@\ell + (1-\mu_h) \cdot i(t-1)@\ell
\end{eqnarray*}

where $\ell' \in \Nb(\ell)$

\end{enumerate}

Up to our knowledge, this is the first system of recurrence equations developed for dengue. We leave as future work
the validation of the model with real data.

\section{Conclusions}
\label{sec:Conclusions}


In this paper we presented a mean-field semantics for \spalps{}. Up to our knowledge, 
\spalps{} is the first spatially-explicit
probabilistic process calculus to be extended with mean-field semantics. Using this semantics
we can analyze deterministically the average behavior of a spatially-explicit ecological
model even for large populations. The advantages of this new semantics is that 
it allows us to translate from an individual-based model to the underlying population
dynamics and that it is possible to do this efficiently without computing the complete
state space of the model.  In particular, we showed how this is applicable for
epidemiological models by our case study on the transmission of dengue. 



As future work, we want to further study the spatial distribution of dengue on the lines of Otero et al. \cite{Otero2010}. In fact, there is demographic and epidemiological information about the
reported cases in each district of Bello (Antioquia), Colombia. It is of vital importance for public health to determine which districts have more risk by determining the migration patterns of mosquitoes from district to district. This is of importance to define vaccination and fumigation schemes to prevent epidemics.

\let\oldbibliography\thebibliography
\renewcommand{\thebibliography}[1]{%
  \oldbibliography{#1}%
  \setlength{\itemsep}{0pt}
}

\bibliographystyle{eptcs}
{\footnotesize
\bibliography{biblio2015}
}

\end{document}